\documentclass{article}
\usepackage{authblk}
\usepackage[T1]{fontenc}
\usepackage{microtype}
\usepackage{graphicx} 
\usepackage{subcaption}
\usepackage{amsthm} 
\usepackage{amssymb}
\usepackage{booktabs}
\usepackage[margin=2.5cm]{geometry}
\usepackage{nicefrac, xfrac} \usepackage{algorithm}
\usepackage[noend]{algpseudocode}
\usepackage{amsmath}
\usepackage{hyperref} 
\hypersetup{
    colorlinks=true,
    linkcolor=blue,
    citecolor=blue}
\usepackage[capitalise]{cleveref}
\usepackage{comment}
\usepackage{xcolor}
\usepackage{todo}
\usepackage{tikz}
\usepackage{tikzpeople}
\usetikzlibrary{decorations.pathreplacing,positioning,calc,backgrounds,intersections,shapes.geometric,arrows.meta,fit,patterns}
\tikzset{
    >=latex
}

\usepackage{mathtools}

\DeclareMathOperator{\TAG}{TAG}
\DeclareMathOperator{\VRFY}{VRFY}
\DeclareMathOperator{\ENC}{ENC}
\DeclareMathOperator{\COMP}{COMP}

\usepackage[normalem]{ulem}

\newcommand{\calH}{\mathcal{H}}
\newcommand{\calK}{\mathcal{K}}
\newcommand{\calM}{\mathcal{M}}
\newcommand{\calT}{\mathcal{T}}
\newcommand{\calZ}{\mathcal{Z}}
\newcommand{\calA}{\mathcal{A}}
\newcommand{\calC}{\mathcal{C}}
\newcommand{\calS}{\mathcal{S}}
\newcommand{\calF}{\mathcal{F}}

\newtheorem{theorem}{Theorem}[]
\newtheorem{definition}{Definition}[]

\theoremstyle{remark}

\newtheorem{remark}{Remark}

\date{}
\begin{document}

\author[1]{Claudia De Lazzari\thanks{claudia.delazzari@qticompany.com}}
\author[2]{Francesco Stocco}
\author[2]{Edoardo Signorini}
\author[3]{Giacomo Fregona}
\author[1]{Fernando Chirici}
\author[1]{Damiano Giani}
\author[1]{Tommaso Occhipinti}
\author[2]{Guglielmo Morgari}
\author[1,4]{Alessandro Zavatta}
\author[1,5]{Davide Bacco}

\affil[1]{QTI S.r.l. --- Largo Enrico Fermi 6, 50125, Florence, Italy}
\affil[2]{Telsy S.p.A. --- Corso Svizzera 185, 10149, Turin, Italy}
\affil[3]{Department of Computer Science, University of Copenhagen ---	Universitetsparken 1 DK-2100 Copenhagen Ø}
\affil[4]{National Institute of Optics
National Research Council (INO--CNR) --- Largo Enrico Fermi 6, 50125 Florence, Italy
}
\affil[5]{Department of Physics and Astronomy, University of Florence --- 50019, Florence, Italy
}

\title{Send the Key in Cleartext: Halving Key Consumption while Preserving Unconditional Security in QKD Authentication}

\maketitle

\begin{abstract}
	Quantum Key Distribution (QKD) protocols require Information-Theoretically Secure (ITS) authentication of the classical channel to preserve the unconditional security of the distilled key.
	Standard ITS schemes are based on one-time keys: once a key is used to authenticate a message, it must be discarded.
	Since QKD requires mutual authentication, two independent one-time keys are typically consumed per round, imposing a non-trivial overhead on the net secure key rate.
	In this work, we present the \emph{authentication-with-response} scheme, a novel ITS authentication scheme based on $\varepsilon$-Almost Strongly Universal\textsubscript{2} ($\varepsilon$-ASU\textsubscript{2}) functions, whose IT security can be established in the Universal Composability (UC) framework.
	The scheme achieves mutual authentication consuming a single one-time key per QKD round, halving key consumption compared to the state-of-the-art.\\

	Keywords: Information-theoretic secure authentication, Quantum key distribution,  Partially known keys, Almost strong universal functions, Universal composability
\end{abstract}

\section{Introduction}
Quantum Key Distribution (QKD) allows distant parties to establish an unconditionally secure shared key by exploiting the laws of quantum mechanics~\cite{BENNETT20147,Ekert1992}.
In QKD protocols, the exchange of quantum states through a quantum channel is followed by classical post-processing carried out over a public classical channel, in order to distill the final secure key.
While the quantum channel is treated as untrusted, the classical channel must be \emph{authenticated}: both parties must be able to verify the identity of their counterpart and the integrity of every exchanged message, except with an explicitly quantified failure probability.

To highlight the crucial role played by authentication, observe that in the absence of an authenticated classical channel, any communication is vulnerable to Man-In-The-Middle (MITM) attacks.
In a MITM attack, the adversary interposes itself between two legitimate parties, usually referred to as Alice and Bob, impersonating one to the other throughout the communication and potentially reading, replacing, or injecting messages undetected.
Authentication is therefore foundational to secure communication in the presence of an active adversary and is considered at least as critical as confidentiality.

An authenticated channel can be built using various cryptographic techniques.
In the symmetric-key setting, the standard tool is a Message Authentication Code (MAC), which derives a short tag from the message and a shared secret key: an adversary who does not know the key cannot forge a valid tag.
In the public-key (or asymmetric-key) setting, the relevant primitive is a Digital Signature, which additionally provides the stronger property of non-repudiation: the signer cannot later deny having produced the signature.

Classical cryptography is typically based on \emph{computational security}, i.e., it relies on the assumption that no efficient (polynomial-time) algorithm exists for certain mathematical problems, and thus offers security only against computationally bounded adversaries.

Since QKD targets key establishment within the stronger \emph{unconditional security} paradigm, in which security must hold against any admissible quantum and classical attack, in particular any adversary with arbitrary computational power, Information-Theoretic Secure (ITS) authentication is required. In an ITS scheme, no adversary, regardless of its computational resources, should have non-negligible advantage over random guessing of a valid tag.

The foundational framework of ITS authentication was laid by Wegman and Carter~\cite{WC81}, who introduced an authentication scheme based on sets of Almost Strongly Universal (ASU) functions and proved its unconditional security.
In their scheme, a shared secret key selects an element from a publicly known set of ASU\textsubscript{2} functions; the tag is then the function evaluated on the message.

The original Wegman-Carter analysis assumes a \emph{perfect} shared key, i.e., one that is uniformly distributed over the key space. In the QKD  context, however, the authentication key is itself produced by the QKD system: it is drawn from the pool of distilled key material, which is only $\varepsilon'$-close to uniform (in trace distance) due to residual information leakage to an eavesdropper. Applying the classical Wegman-Carter result directly would therefore be incorrect, as its security guarantees are conditioned on a perfectly uniform key.
The extension of the theory to such \emph{partially known keys} is due to Abidin and Larsson~\cite{AL14}, who show that Wegman-Carter authentication with an $\varepsilon'$-perfect key and an $\varepsilon$-ASU\textsubscript{2} set remains secure with failure probability $\varepsilon+\varepsilon'$. Their result is established within the Universal Composability (UC) framework~\cite{CAN01}, which ensures that the security guarantee is preserved when the authentication protocol is combined with other protocol components, in particular with the quantum key establishment itself. Concretely, security in the UC framework is assessed via a distinguisher-based definition: the authentication protocol is $\varepsilon$-secure if no environment can distinguish the real implementation from the ideal authenticated channel, except with a small failure probability.
The present work is positioned in this line of research.

While early QKD works typically assumed an authenticated classical channel as a black box, sustained attention to the concrete cost and design of authentication has emerged more recently, driven by practical deployments of QKD systems~\cite{SCA09,FUNG10}. Two topics have become particularly relevant: the impact of authentication on the net secure key rate, and the bootstrapping problem of the very first round.
The former concerns the fact that authentication keys are themselves secret material, so their consumption directly reduces the amount of key available for other uses. The latter concerns the question of how the very first QKD session can be authenticated before any QKD-generated key is available, typically requiring a pre-shared key of sufficient quality.
This work addresses the former but does not enter into the latter, although we regard it as important and therefore mention it here.

During classical communication of the QKD protocol, authentication must be mutual: Alice and Bob authenticate to each other, w.l.o.g.\ first Alice to Bob, then Bob to Alice, so that both parties can verify the integrity of the full round's classical transcript.
In both directions, authentication is realized by transmitting a tag computed under a shared secret key, pre-shared for the first round, or drawn from the QKD system's key pool in subsequent rounds.
Consequently, two independent one-time authentication keys are consumed per QKD round, one per direction. Since each key can be used only once to maintain ITS security (reuse would allow an adversary to correlate multiple tag-message pairs and undermine the ASU\textsubscript{2} guarantee), repeated QKD rounds impose a non-trivial authentication overhead that reduces the net secure key rate.
To mitigate this issue, in modern practice, individual classical messages are not authenticated one by one; instead, both parties accumulate their messages and perform a single end-of-round authentication.
Also, mechanisms for partial key recycling have been proposed, see for example Ref.~\cite{PORTMANN14}.

\paragraph{Related works and our contributions.}
Among the relevant related works, Kiktenko \emph{et al.}~\cite{KIK20} propose a mutual authentication method that spreads authentication over two consecutive QKD rounds: in the $i$-th round Alice authenticates to Bob, and in round $(i+1)$-st Bob authenticates to Alice the full transcript of round $i$ and $i+1$ in an alternate fashion. They prove IT security and also discuss universal function classes beyond ASU\textsubscript{2}. The approach halves the average authentication key consumption per round. Its main drawback is that a failure in round $i+1$ invalidates round $i$ retroactively, even if the first verification had succeeded, introducing a coupling between consecutive rounds.

This observation motivates the search for an authentication mechanism that is mutual, parsimonious in terms of key consumption, and that preserves the independence of consecutive QKD rounds.
The present work introduces a novel scheme, the \emph{authentication-with-response} scheme~\cite{footnote1}, which achieves all three goals.
The high-level idea is the following: Alice authenticates to Bob by computing a one-time MAC tag under a shared key $k$; upon successful verification, Bob authenticates to Alice by revealing $k$ itself. Since $k$ is one-time and has already been consumed, its disclosure is harmless to security. The proposed scheme consumes a single authentication key per round for mutual authentication, halving key consumption compared to standard approaches while maintaining ITS security within the UC framework.
We describe the scheme in full, introduce the necessary mathematical background, and establish its security via a distinguisher-based proof within the UC framework, following the techniques of Ref.~\cite{AL14}.

\paragraph{Structure of the paper.}
The manuscript is organized as follows. \Cref{sec:pre} introduces the mathematical tools used throughout the paper and discusses the security framework in which ITS authentication schemes are typically formulated;
\cref{sec:aws} describes the proposed {authentication-with-response} scheme and proves its security;
\cref{app:mult:round} provides additional insights into the security of the proposed scheme by comparing it with state-of-the-art solutions.

\section{Preliminaries}\label{sec:pre}
This section presents the notation used throughout the manuscript (\cref{sec:notation}); the $\varepsilon$-Almost Strongly Universal\textsubscript{2} functions (\cref{sec:ASU}); the trace distance between probability distributions, our principal metric for studying authentication with partially known keys (\cref{sec:partiallykk}); and the distinguisher-based method from UC framework (\cref{sec:UC}), which is the technique we adopt to establish our main result.

\subsection{Notation}\label{sec:notation}
Throughout this work, for any set $\calS$, its cardinality is denoted by $|\calS|$.
Consider a discrete random variables $X : \Omega \to \mathcal{X}$ defined on a discrete probability space $(\Omega, \mathcal{P}(\Omega), P)$ and a discrete measurable space $(\mathcal{X}, \mathcal{P}(\mathcal{X}))$, where $\mathcal{P}(\cdot)$ denotes the power set.
The probability distribution of $X$ is written $P_X$, so that $P_X(x) = \Pr[X = x]$ for each $x \in \mathcal{X}$, and $\Pr[X \in \mathcal{X}'] = \sum_{x \in \mathcal{X}'} P_X(x)$ for any $\mathcal{X}' \subseteq \mathcal{X}$.
Given a further random variable $Y : \Omega \to \mathcal{Y}$ on a discrete measurable space $(\mathcal{Y}, \mathcal{P}(\mathcal{Y}))$, the joint probability distribution (resp.\ conditional distribution) is denoted by $P_{XY}(x,y) = \Pr[X = x, Y = y]$ (resp. $P_{X \mid Y}(x \mid y) = \Pr[X = x \mid Y = y]$) for $x \in \mathcal{X}$ and $y \in \mathcal{Y}$.
Finally, for a predicate $\mathbf{eq}$ over $\mathcal{X}$, we write $\Pr[\mathbf{eq}(X)]$ as shorthand for $\Pr\bigl[X \in \{x \in \mathcal{X} \mid \mathbf{eq}(x)\}\bigr]$.

\subsection{Authentication with Almost Strong Universal functions}\label{sec:ASU}
In this section, the authentication scheme of Wegman and Carter~\cite{CW79,WC81}, which achieves information-theoretic security, is recalled. Their work revolves around the idea of \emph{$\varepsilon$-Almost Strongly Universal$_2$} functions, which has been mathematically formalized later by Stinson~\cite{STINSON94}. This particular strategy is known to be widely used in real QKD systems~\cite{AL14, FREG24}.

The main mathematical object, \emph{$\varepsilon$-Almost Strongly Universal\textsubscript{2}} functions, are therefore introduced.
\begin{definition}[$\varepsilon$-ASU\textsubscript{2} functions]\label{def:epsASU}
	Let $\calM$ and $\calT$ be finite sets, let $\calH$ be a set of functions from $\calM$ to $\calT$ and let $\varepsilon$ be a positive real number. The set $\calH$ is called $\varepsilon$-Almost Strongly Universal$_2$ if the following conditions hold:
	\begin{enumerate}
		\item\label{ia}for any $m_1\in\calM$ and $t_1\in\calT$:
		      \begin{equation*}
			      \dfrac{|\{h\in\calH\mid h(m_1)=t_1\}|}{|\calH|}=\dfrac{1}{|\calT|};
		      \end{equation*}
		\item\label{sa} let $\calH_1:=\{h\in\calH\mid h(m_1)=t_1\}$, for any $m_2\in\calM\backslash \{m_1\}$ and $t_2\in\calT$:
		      \begin{equation*}
			      \dfrac{|\{h\in\calH_{1}\mid h(m_2)=t_2\}|}{|\calH_{1}|}\leq\varepsilon.
		      \end{equation*}
	\end{enumerate}
\end{definition}

Now, let $\calM$ be the set of all possible \emph{messages}, $\calT$ the set of all possible \emph{tags} and $\calK$ the set of \emph{keys}.
Each key $k\in \calK$ defines a function of the $\varepsilon$-ASU\textsubscript{2} set $\calH=\{ h_{k}:\calM \to \calT\}_{k\in\calK}$.
That is, the secret key shared by two parties (Alice and Bob), is an element $k\in\calK$ that serves as an index selecting $h_{k}\in\calH$; in particular $|\calH|=|\calK|$.
Moreover, let $K$ a uniformly distributed random variable over $\calK$; Conditions~\ref{ia} and~\ref{sa} of \cref{def:epsASU} can be rewritten in terms of probabilities as follows:
\begin{enumerate}
	\item\label{iak}for any $m_1\in\calM$ and $t_1\in\calT$:
	      \begin{equation*}
		      \Pr[h_K(m_1)=t_1]=\dfrac{1}{|\calT|};
	      \end{equation*}
	\item\label{sak} for any $m_2\in\calM\backslash \{m_1\}$ and $t_1, t_2\in\calT$:
	      \begin{equation*}
		      \Pr[h_K(m_2)=t_2 \mid h_K(m_1)=t_1]\leq\varepsilon.
	      \end{equation*}
\end{enumerate}

Given the use of this set of functions\footnote{Notice that from a practical point of view it is desirable that the length of the key (as well as the length of the tag) is much smaller than the message length, i.e. $\log|\calM|\gg\log|\calK|$ and $\log|\calM|\gg\log|\calT|$, where $\log$ denotes the binary logarithm. This is a common requirement in cryptographic applications.}, the basic authentication scheme proceeds as follows.
If Alice wants Bob to receive a message $m$ through an insecure channel with (almost) no doubts about authenticity, she computes the MAC $h_{k}(m)=t$ and sends the pair  $(m,t)$. To verify authenticity, Bob checks if $h_k(m)=t$ since he knows the secret key $k$. If the check succeeds he keeps the message $m$, otherwise the communication is aborted.
To break the system an attacker, Eve, must be able to produce a correct tag without the information $k$. The eavesdropper has at least two strategies that in the literature are called \emph{impersonation attack} and \emph{substitution attack}, and
each of the two $\varepsilon$-ASU\textsubscript{2} defining conditions directly corresponds to one of them. In the first case, Eve wants to impersonate Alice to Bob (or vice versa) and tries to trick Bob (or Alice) by guessing a tag $t_{E}$ for her message $m_{E}$.
Considering Condition~\ref{iak},
Eve's success probability is just $1/|\calT|$, i.e., the tag is completely random to Eve.
The substitution attack may be more incisive.
Eve detects a pair  $(m_A,t_A)$ sent by Alice to Bob (or vice versa) and substitutes it with $(m_E,h(m_E))$ choosing a random element $h\in\{h_{k}\in\calH\mid h_{k}(m_A)=t_A\}$\footnote{Obtaining this set might be quite arduous from a computational point of view. However,
	Eve has unlimited computational power in the ITS model.}. According to Condition~\ref{sak},
the success probability of this strategy is bounded by $\varepsilon$.

Finally, observe that the $\varepsilon$-ASU\textsubscript{2} conditions, by themselves, provide no guarantees when multiple message-tag pairs, obtained with the same function, are known to Eve.
More formally, let $n>1$ be an integer,
let $(m_1,h_k(m_1)),\dots,(m_n,h_k(m_n))$ be different message-tag pairs corresponding to the same secret key $k$ and define $\calH_{n}:=\{h\in\calH\mid h(m_{i})=h_{k}(m_i),\,\forall  i\leq n\}$.
To the best of our knowledge, for any message $m_{n+1}\neq m_{i}$ for all $i\leq n$, there is, a priori, no non-trivial bound on:
\begin{equation*}
	\dfrac{|\{h\in\calH_n\mid h(m_{n+1})=h_{k}(m_{n+1})\}|}{|\calH_{n}|}.
\end{equation*}
For this reason, at the end of each authentication step, the corresponding key $k$ is discarded, and this ITS MAC construction is also called one-time MAC.\@

\subsection{Authentication with Partially Known Keys}\label{sec:partiallykk}
A QKD system operates as a continuous key generator, therefore the number of needed authentication instances is neither predetermined nor bounded.
Each QKD round produces a certain amount of secure key material which contributes to filling a pool of keys for later uses, with a designated fraction reserved to authenticate messages in later QKD rounds.
In the case of authentication with $\varepsilon$-ASU\textsubscript{2}, the keys identify new elements of the set.

At the beginning of the first QKD round, the authentication key is completely random to an attacker, as practical QKD systems are usually equipped with a pre-shared key.
Thereafter, an eavesdropper may gain information on the key distributed during the round.
Therefore, from the second round onward, it is required to account for Eve's partial knowledge of the keys produced by the QKD system, part of which are used to select elements of the $\varepsilon$-ASU\textsubscript{2} set for authentication purposes.

This motivates introducing a distance between a ``perfect'' key (i.e., uniformly distributed over the key space) and the key generated by the QKD system. Information leakage is quantified by the \emph{trace distance} between the probability distributions.
\begin{definition}
	Let $(\Omega,\calF, P)$, $(\Omega,\calF, Q)$ be two probability spaces and let $X:\Omega\rightarrow\mathcal{X}$ be a random variable. The \emph{trace distance} or \emph{statistical distance} between two probability distributions $P_{X}$ and $Q_{X}$ is defined as:
	\begin{equation}
		\delta(P_{X},Q_{X})=:\frac{1}{2}\sum_{x\in\mathcal{X}}|P_{X}(x)-Q_{X}(x)|.
	\end{equation}

	Moreover, a random variable $K:\Omega\rightarrow\calK$ over the probability space  $(\Omega,\calF, P)$ is called \emph{perfect} if it is uniformly distributed, i.e., any $k\in\calK$ is such that $P_{K}(k)=1/|\calK|$. More generally, a random variable $K$ is called $\varepsilon'$-\emph{perfect} if its distribution has an $\varepsilon'$ trace distance to the uniform.
\end{definition}
Therefore, if the trace distance between the distribution of the key produced by the real QKD system and the uniform distribution is at most $\varepsilon'$, then the key used for authentication is $\varepsilon'$-perfect\footnote{In the literature, reported values of $\varepsilon'$ range from $\sim10^{-9}$ to $\sim10^{-15}$}.
In the UC framework, it is standard to choose the ASU\textsubscript{2} parameter $\varepsilon$ at least two orders of magnitude smaller than $\varepsilon'$ when this $\varepsilon$-perfect key is used for authentication~\cite{KIK20,FREG24}.

\subsection{Universal Composability Framework}\label{sec:UC}
The full-fledged security of a cryptographic protocol does not imply that it can be arbitrarily combined with other protocols while maintaining the same level of security. The Universal Composability (UC) framework~\cite{CAN01, MAU11} is a strong model in which protocols (resources) are proven to be secure even when arbitrarily combined with each other. Security is formulated in terms of indistinguishability, meaning that a resource is called $\varepsilon$-secure if a distinguisher is not able to determine whether it is interacting with the \emph{real implementation} or the \emph{ideal functionality} of the resource, except with probability at most $\frac{1}{2}+\frac{1}{2}\varepsilon$.

In the context of MAC, the ideal functionality is represented by an authenticated channel which connects Alice and Bob, as shown in \cref{fig:ideal_gen}. If a message $m$, sent from Alice to Bob, is modified during the communication, the authenticated channel delivers to Bob an error value $\perp$. On the other hand, if no message alteration occurs during the transmission, Bob receives the expected value $m$. In other words, messages exchanged through an authenticated channel are either authentic or blocked. Therefore, an attacker (Eve) cannot modify or substitute messages without being detected.

\begin{center}
	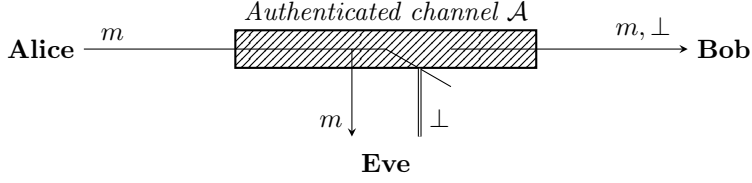
\begin{figure}[tb]
		\centering
		\begin{tikzpicture}[
				>=stealth,
				resource/.style={draw, thick, fill=white, rectangle, minimum height=0.5cm, minimum width=4cm},
				eve/.style={criminal, female, monitor}]

			\node[resource, pattern=north east lines, label=above:{\itshape Authenticated channel $\calA$}] (channel) at (0,0) {};
\node[left = 2cm of channel, font=\bfseries] (alice) {Alice};
\node[right=2cm of channel, font=\bfseries] (bob) {Bob};
\node[below = 1cm  of channel, font=\bfseries] (eve) {Eve};

			\draw[->] (alice.east) node[above right = 0cm and 0.1cm]{$m$} -| ([yshift=0.1cm]eve.north west) node[above left]{$m$};
			\draw[] (alice.east -| eve.north west) -- (channel.center);
			\draw[name path=diagonal] (channel.center) -- +(-30:1) coordinate (ediagonal) ;
			\path[name path=line1] (eve.north east) -- (eve.north east |- channel.center);
			\draw[name intersections={of=diagonal and line1,by={Int1}}, double] ([yshift=0.1cm]eve.north east) node[above right]{$\perp$} -- (Int1);
			\draw[->] (ediagonal |- channel.center) -- (bob.west) node[above left = 0cm and 0.1cm]{$m, {\perp}$};
		\end{tikzpicture}
		\caption{Ideal authentication functionality: Alice sends a message $m$ through the authenticated channel $\calA$. Depending on Eve’s actions, either the message is delivered intact (Bob receives $m$) or the authentication protocol fails (Bob receives $\perp$). }\label{fig:ideal_gen}
	\end{figure}
\end{center}

In the real implementation, represented in \cref{fig:real_gen}, an insecure channel is combined with a secret key source to emulate an authenticated channel. Using the Wegman-Carter scheme, this can be modeled as follows. A key source guarantees that Alice and Bob obtain a common secret $k$ which allows them to select an element $h_k$ from a public family of $\varepsilon$-ASU\textsubscript{2} functions. A tag $h_k(m)$ is then computed from a message $m$ and the pair $(m,h_k(m))$ is sent along the insecure channel. Once the pair is received, a verification step checks whether the incoming pair $(m',t')$ satisfies $h_k(m')=t'$. If such verification succeeds then the message $m'$ is accepted, otherwise it is rejected. Therefore, an attacker is not able to modify or substitute an incoming message, while remaining undetected, unless it guesses the correct tag.

\begin{center}
	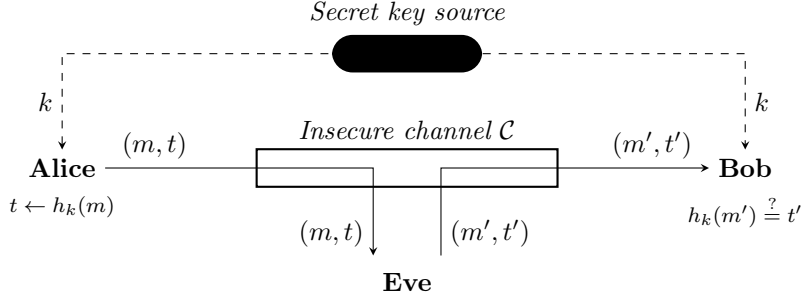
\begin{figure}[tbh]
		\centering
		\begin{tikzpicture}[
				>=stealth,
				resource/.style={draw, thick, fill=white, rectangle, minimum height=0.5cm, minimum width=4cm},
				eve/.style={criminal, female, monitor}]

			\node[resource, label=above:{\itshape Insecure channel $\mathcal{C}$}] (channel)  at (0,0) {};
\node[left = 2cm of channel, font=\bfseries] (alice) {Alice};
\node[right=2cm of channel, font=\bfseries] (bob) {Bob};
\node[below = 1cm  of channel, font=\bfseries] (eve) {Eve};

			\node[rectangle, rounded corners=2.75mm, minimum height=0.5cm, minimum width=2cm, fill=black, above=1cm of channel, label = above:{\itshape Secret key source}] (SKS) {};
			\draw[<-, dashed] (alice.north) |- (SKS.west) node[left, near start]{$k$};
			\draw[->, dashed] (SKS.east) -| (bob.north) node[right, near end]{$k$};

\node[anchor=north] at (alice.south) {\footnotesize$t \gets h_k(m)$};

\draw[->] (alice.east) node[above right = 0cm and 0.1cm]{$(m,t)$} -| ([yshift=0.1cm]eve.north west) node[above left]{$(m,t)$};
			\draw[ ->] ([yshift=0.1cm]eve.north east) node[above right]{$(m',t')$} |- (bob.west) node[above left = 0cm and 0.1cm]{$(m',t')$};

\node[anchor=north] at (bob.south) {\footnotesize$h_k(m') \stackrel{?}{=} t' $};

		\end{tikzpicture}
		\caption{Real authentication functionality: The secret key source delivers a shared secret key between Alice and Bob. Alice computes the tag $t=h_k(m)$ of a message $m$ and sends the pair $(m,h_k(m))$ along the insecure channel $\calC$. Eve intercepts it and resends $(m',t')$. Bob receives $(m',t')$ and verify if $h_k(m')=t'$. If such verification succeeds, then the message $m'$ is accepted, otherwise it is rejected.}\label{fig:real_gen}
	\end{figure}
\end{center}

A distinguisher is an environment which tests ideal functionality and real implementation and tries to guess which one it is interacting with. To do so, a distinguisher can play Alice's and Eve's roles sending a particular message $m$ and trying to substitute it with $m'$ when the transmission is running over the insecure channel. To formulate its guess, it can check also if Bob accepts or rejects $m'$.
When Wegman-Carter authentication with perfect keys with $\varepsilon$-ASU\textsubscript{2} function is used, no distinguisher can identify the real implementation over the ideal one with probability exceeding $\frac{1}{2}+\frac{1}{2}\varepsilon$; that is, the scheme is $\varepsilon$-secure; see for example Ref.~\cite{PORTMANN14}.

The main result linking authentication schemes based on $\varepsilon$-ASU\textsubscript{2} functions selected with imperfect keys and the UC framework is due to Abidin and Larsson~\cite{AL14}:
\begin{theorem}[Abidin, Larsson]
	Wegman and Carter's scheme based on $\varepsilon$-ASU\textsubscript{2} functions, employed with an $\varepsilon'$-perfect authentication key $k$, is $\varepsilon+\varepsilon'$-secure.
\end{theorem}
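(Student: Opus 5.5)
We prove the statement with a hybrid argument that inserts an intermediate system between the real and the ideal one. Let $\mathbf{R}'$ denote the real Wegman--Carter implementation of \cref{fig:real_gen} with an $\varepsilon'$-perfect key $K$. Let $\mathbf{R}$ denote the same implementation with a perfect key $U$, uniform on $\calK$. Let $\mathbf{I}$ denote the ideal authenticated channel of \cref{fig:ideal_gen}, together with a simulator for Eve: the simulator samples a fresh uniform key, computes the tag on the message it sees, and forwards to the channel the bit ``modified/unmodified'' according to whether Eve's reply $(m',t')$ differs from $(m,t)$ and fails verification. For any distinguisher $D$ we write its advantage as
\[
\bigl|\Pr[D(\mathbf{R}')=1]-\Pr[D(\mathbf{I})=1]\bigr| \le \bigl|\Pr[D(\mathbf{R}')=1]-\Pr[D(\mathbf{R})=1]\bigr| + \bigl|\Pr[D(\mathbf{R})=1]-\Pr[D(\mathbf{I})=1]\bigr| .
\]
We then bound the two terms separately, by $\varepsilon'$ and by $\varepsilon$.

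\textbf{First term.} The systems $\mathbf{R}'$ and $\mathbf{R}$ differ only in the distribution of the key. The whole interaction is a (possibly randomized) function of the key and of $D$'s own choices. By the data-processing property of the trace distance, the output distributions of $D$ therefore differ by at most $\delta(P_K,P_U)\le\varepsilon'$. If Eve holds side information $E$ about the key, the $\varepsilon'$-perfectness must be read as $\delta(P_{KE},P_U\times P_E)\le\varepsilon'$, and the same argument applies after giving $E$ to $D$. Making this step precise requires the following. We fix $D$ and couple the two experiments. Conditioned on a key value, the transcripts have the same distribution in both experiments. Summing over key values then gives $\frac12\sum_k|P_K(k)-1/|\calK||$ as an upper bound on the difference of acceptance probabilities.

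\textbf{Second term.} With a perfect key, $\mathbf{R}$ and $\mathbf{I}$ produce identical views except on the event $F$ in which Eve delivers some $(m',t')$ with $m'\neq m$ and $h_U(m')=t'$. On this event Bob accepts in the real system but outputs $\perp$ in the ideal one. A replay with $m'=m$ and an altered tag is rejected in both systems, and $m'=m$, $t'=t$ is delivered correctly in both. The case with no message sent, i.e.\ impersonation, is bounded by $1/|\calT|\le\varepsilon$ using Condition~\ref{iak}; this inequality follows from Condition~\ref{sak} by averaging over $t_2$. By the fundamental lemma of game playing, the distinguishing advantage is at most $\Pr[F]$. Conditioning on the observed pair $(m,t)$, the key $U$ is uniform on $\{k : h_k(m)=t\}$. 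Whatever $(m',t')$ Eve chooses, possibly at random, Condition~\ref{sak} gives $\Pr[h_U(m')=t'\mid h_U(m)=t]\le\varepsilon$. Averaging over $(m,t)$ and over Eve's randomness yields $\Pr[F]\le\varepsilon$; this is the standard $\varepsilon$-security of Wegman--Carter with perfect keys recalled above.

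\textbf{Combining.} The two bounds together give total advantage at most $\varepsilon+\varepsilon'$, which is the claim. The main obstacle is the first term. The adversary's correlation with an imperfect key must be modeled correctly, as a joint distance $\delta(P_{KE},P_U\times P_E)$ rather than only a marginal distance. We must also check that adaptive choices of $m$ by $D$ do not break the data-processing step. To handle this, we will treat $D$ together with the protocol as a single stochastic map applied to $(K,E)$, so that the trace-distance contraction applies directly.
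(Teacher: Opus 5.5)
Your proof is correct, but it takes a different route from the one the paper relies on. The paper does not reprove this theorem. It imports it from \cite{AL14} and reproduces that paper's technique in the proof of Theorem~\ref{thrm:main}, which is deliberately direct. The simulator is given the \emph{same} $\varepsilon'$-perfect key distribution as the real system, so $X,Y,Y'$ are identically distributed and the no-tampering part of the trace distance vanishes exactly. The tampering part equals $\sum_m P_X(m)\sum_{t,y'\neq y}P_{Y'|Y}(y'|y)\Pr[h_K(m)=t,h_K(m')=t']$. For fixed $m$ and a deterministic attack, the inner sum is $\Pr[K\in S_m]$, with $S_m$ a disjoint union over $t$ of sets of size at most $\varepsilon|\calK|/|\calT|$. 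The bound $\Pr[K\in S]\le|S|/|\calK|+\varepsilon'$ (Lemma~1 of \cite{AL14}) then gives $\varepsilon+\varepsilon'$, and randomized attacks are handled by averaging over the attacker's coins. You instead give the simulator a fresh uniform key and pay $\varepsilon'$ once, up front, via a perfect-key hybrid and data processing; this is the composition-theorem argument made explicit. Your route is shorter and more modular: adaptivity, randomized attacks and correlated side information are all absorbed by a single contraction step. You are also right that with side information the hypothesis must be the joint condition $\delta(P_{KE},P_U\times P_E)\le\varepsilon'$, not the marginal one in the paper's definition. Your combinatorial step only has to be carried out for a uniform key, where Condition~\ref{sak} applies verbatim. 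The direct route instead buys a self-contained computation that never passes through an ideal-key system, which is the point of the ``direct proof'' in \cite{AL14}: conditioning on an observed pair can push a partially known key further from uniform, and one wants to see explicitly that only averaged quantities enter. It also yields an exact expression in terms of the attack $P_{Y'|Y}$. The paper extends that expression to the response scheme by adding the singleton key sets $\{K=k_{y,y'}\}$ to the same union before invoking the lemma once. Your hybrid would extend to that scheme equally well, since revealing $k$ is still a stochastic function of the key. One small wording point: you first describe the simulator as flagging a modification only when the reply differs from $(m,t)$ \emph{and} fails verification, but you then have the ideal Bob output $\perp$ on $F$. Under either reading, $\mathbf{I}$ and $\mathbf{R}$ differ only on $F$, so the bound is unaffected.
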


\section{Authentication-with-response Scheme}\label{sec:aws}
In practical QKD protocols, mutual authentication between the communicating parties is required. Therefore, in addition to the authentication procedure from Alice to Bob described above, an equivalent authentication step must be performed in the opposite direction.
Since authentication keys are one-time, each bidirectional classical communication consumes two independent authentication keys, one per direction, thereby potentially reducing the overall secret key rate.

To avoid massive authentication key consumption due to this one-time nature, Alice and Bob typically collect the entire set of classical messages exchanged within the QKD round and perform a single authentication step at its end, see e.g., Ref.~\cite{Cede08}. Their local reconstructed messages (denoted by $m_A$ and $m_B$) coincide in the absence of in-transit tampering. In this setting, Alice transmits only the authentication tag $t=h_k(m_A)$ at the end of the round, while Bob independently computes the expected tag $h_k(m_B)$ under the shared key $k$, and accepts if and only if the tags coincide. This arrangement is functionally equivalent to the one considered within the manuscript, i.e. Alice sends message \emph{and} tag, and Bob checks the pair using the common secret, and vice versa.
However, this method mitigates the key-consumption problem but does not completely solve it, since two one-time keys are still consumed for mutual authentication.

In this work, we propose a novel strategy, named \emph{authentication-with-response}, that preserves IT security while requiring only a \emph{single authentication key per round}, thus halving the key consumption associated with authentication.
The high-level idea behind the scheme is that once Alice has authenticated to Bob by computing a one-time MAC with key $k$, Bob verifies the received tag and authenticates to Alice by revealing $k$ itself.

The novelty of the proposed scheme lies in the response phase: after Alice authenticates to Bob using an $\varepsilon$-ASU\textsubscript{2} tag computed under the shared key $k$, and Bob verifies it, Bob authenticates to Alice by revealing $k$ itself.
This may sound surprising, since the scheme's security is based on the secrecy of $k$. The key observation is that $k$ is disclosed only after it has been fully consumed: as a one-time key it can never be reused, so its exposure at this point does not compromise security.

The scheme flow is as follows, and is schematically illustrated in \cref{fig:aut-res}. Let $\calM$ and $\calT$ be the sets of all possible messages and authentication tags of a certain fixed length, respectively. Let $\calK$ be a set of keys of a fixed length and $\calH = \{h_k: \calM \longrightarrow \calT\}_{k\in \calK}$ be a set of $\varepsilon$-ASU\textsubscript{2} functions.
Prior to authentication, Alice and Bob hold a pre-shared secret key $k \in \calK$ and have each accumulated a local transcript of the messages exchanged during the QKD round, denoted by $m_A$ and $m_B$ respectively, which may differ due to channel noise or adversarial intrusion.
The authentication-with-response scheme consists of the following steps:
\begin{enumerate}
	\item Let $m_A\in \calM$, the message collection owned by Alice, and $h_k\in \calH$. Alice computes the tag $t := h_k(m_A)\in \calT$ and sends $t\in\calT$ to Bob through the classical channel.
	\item Bob receives $t'\in \calT$, that may differ from $t$ due to channel noise or adversarial intrusion, and compares it with $h_k(m_B)\in \calT$, where $m_B$ is his message collection. The message is accepted if the received and computed tags are equal, i.e., if $t'=h_k(m_B)$.
	\item\label{resp} Bob defines the response $r$, setting it to $k$ in case the message has been accepted and to $\perp$ in the case the message has been refused (notice that the response value is not hidden from an attacker). Then $r$ is sent to Alice through the channel\footnote{Notice that the security proof of the scheme does not require the content of the failure response $\perp$ to be hidden. However, in some cases $\perp$ can be encoded as $k'\in\calK\backslash\{k\}$ such that $h_{k'}(m) = t$, e.g., using polynomial hashing~\cite{BO93}. In this way, an attacker is unable to distinguish between an accepting or rejecting response from Bob.}.
	\item\label{ver-res} Alice receives $r'$, which may differ from $r$ due to channel noise or adversarial intrusion, and compares it with $k$ to evaluate the incoming response. Equality represents success while inequality represents failure. Notice that any response other than $k$ or $\perp$ represents a failure for Alice.
\end{enumerate}

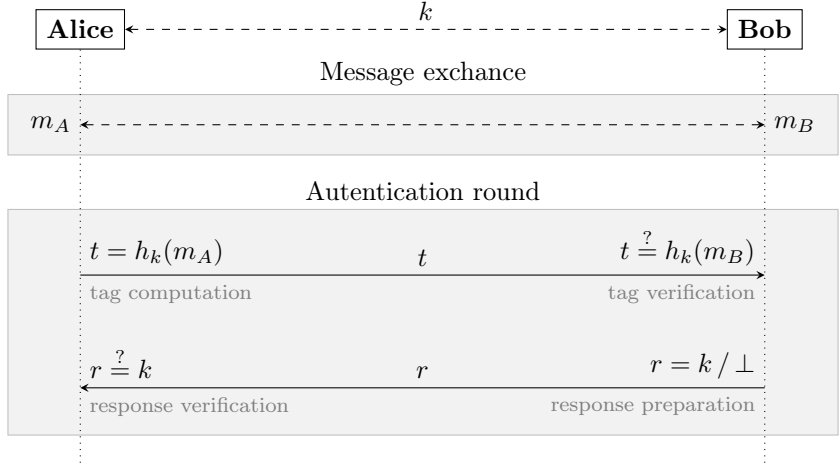
\begin{figure}[t]
	\centering
	\begin{tikzpicture}[
		>=stealth,
		party/.style      = {draw, fill=white, inner sep=4pt, font=\bfseries},
		note/.style       = {inner sep=2pt, align=left},
		annotation/.style = {note, font=\footnotesize, gray},
		msg/.style        = {->},
		phasebox/.style   = {fill=gray!10, draw=gray!50},
		inbox/.style      = {draw, rounded corners, fill=white, inner sep=2pt},
		num/.style        = {circle, draw, fill=white, inner sep=1.5pt, font=\footnotesize\bfseries},
	]

	\node[party] (A) {Alice};
	\node[party, right=8cm of A] (B) {Bob};

    \draw[<->, dashed] (A) -- node[above, midway] (key) {$k$} (B);

	\coordinate[below=1cm of A] (Ar0);
	\coordinate (Br0) at (B |- Ar0);
	\draw[<->, dashed] (Ar0) -- (Br0);
	\node[left=0cm of Ar0] (MA0) {$m_A$};
	\node[right=0cm of Br0] (MB0) {$m_B$};

	\coordinate[below=2cm of Ar0] (Ar1);
	\coordinate (Br1) at (B |- Ar1);
	\node[left=0cm of Ar1] (MA1) {$\phantom{m_A}$};
	\node[right=0cm of Br1] (MB1) {$\phantom{m_B}$};

	\node[note, anchor=west, above right=2pt of Ar1.east] (tcomp-f) {$t = h_k(m_A)$};
	\node[annotation, anchor=west, below right=2pt of Ar1] (tcomp-l) {tag computation};

	\node[note, anchor=east, above left=2pt of Br1.west] (tver-f) {$t \stackrel{?}{=} h_k(m_B)$};
	\node[annotation, below left=2pt of Br1] (tver-l) {tag verification};

	\draw[msg] (Ar1) -- node[above, midway]{$t$} (Br1);

	\coordinate[below=1.5cm of Ar1] (Ar2);
	\coordinate (Br2) at (B |- Ar2);

	\node[note, anchor=east, above left=2pt of Br2.west] (rprep-f) {$r = k \,/\, \bot$};
	\node[annotation, below left=2pt of Br2] (rprep-l) {response preparation};

	\node[note, anchor=west, above right=2pt of Ar2.east] (rver-f) {$r \stackrel{?}{=} k$};
	\node[annotation, anchor=west, below right=2pt of Ar2] (rver-l) {response verification};

	\node[above] (lbl-r) at ($(Ar2)!.5!(Br2)$) {$r$};
	\draw[msg] (Br2) -- (Ar2);

	\coordinate[below=1cm of Ar2] (A-bot);
	\coordinate (B-bot) at (B |- A-bot);
	\draw[dotted] (A.south) -- (A-bot);
	\draw[dotted] (B.south) -- (B-bot);

	\begin{scope}[on background layer]
		\node[phasebox, inner sep=5pt,
			fit=(Ar0)(Br0)(MA0)(MB0), label={[above]Message exchance}] {};
		\node[phasebox, inner sep=5pt,
			fit=(tcomp-f)(tcomp-l)
			(tver-f)(tver-l)
			(rprep-f)(rprep-l)
			(rver-f)(rver-l)
            (MA1)(MB1),
			label={[above] Autentication round}] {};
	\end{scope}
\end{tikzpicture}
 	\caption{High level representation of the Authentication-with-response scheme.}\label{fig:aut-res}
\end{figure}

In the following, the presented authentication-with-response scheme is proved to be secure in the UC framework along the lines of the proof in~\cite{AL14}.

\subsection{Scheme Functionalities}\label{sec:functs}
The security proof, provided in the UC framework, establishes that the real functionality is indistinguishable from its ideal counterpart, except with low probability. That is, after formalizing the real and ideal functionalities of the authentication-with-response scheme, we bound the trace distance between the probability distributions of the random variables that describe the two executions.

\begin{figure}[tb]
	\centering
	\begin{tikzpicture}[rect/.style={draw, minimum height=0.7cm}]
    \node[rect, minimum width=10cm] at (0,0) (Z) {\Large$\mathcal{Z}$};
    \node[rect, minimum width=1.5cm, above=4cm of Z.west, anchor=west] (alice) {Alice};
    \node[rect, minimum width=1.5cm, above=4cm of Z.east, anchor=east] (bob) {Bob};
    \coordinate (middle) at ($(alice)!0.5!(bob)$);
    \node[rect, minimum width=1.2cm, left=0cm of middle, xshift=-1.5em, yshift=-1em] (comp) {\small COMP};
    \node[rect, minimum width=1.2cm, above=0cm of comp.north, anchor=south east, yshift=0.5em] (tag) {\small TAG};
    \node[rect, minimum width=1.2cm, right=0cm of middle, xshift=1.5em, yshift=1.5em] (vrfy) {\small VRFY};
    \node[rect, minimum width=1.2cm, below=0cm of vrfy.south, anchor=north west, yshift=-0.5em] (enc) {\small ENC};
    \node[rect, rounded corners=3mm, minimum width=1.5cm] at ([yshift=1.2cm]$(tag)!0.5!(vrfy)$) (key) {Key};
    \node[rect, minimum width=4.8cm] at ($(middle)!0.5!(Z)$) (A) {\Large$\mathcal{A}$};

\draw[->] (Z.west) to[out=135, in=225] node[midway, left]{\footnotesize$m$} (alice.west);
    \draw[->] ([xshift=0.5em]alice.south west) -- node[midway, left]{\footnotesize$0,1$} ([xshift=0.5em]Z.north west);
    
    \draw[->] ([yshift=0.5em]alice.east) -- node[midway, above]{\footnotesize$m$} (tag.west);
    \draw[<-] ([yshift=-0.5em]alice.east) -- node[midway, below]{\footnotesize$0,1$} ([yshift=-0.5em]alice.east -| comp.west);

    \draw[->] (key) -- node[midway, above left]{\footnotesize$k$} (tag.north);
    \draw[->] (key) -- node[midway, right]{\footnotesize$k$} (comp.north);
    \draw[->] (key) -- node[midway, above right]{\footnotesize$k$} (vrfy.north);
    \draw[->] (key) to[out=270, in=180] node[midway, left]{\footnotesize$k$} (enc.west);

    \draw[->] ([xshift=-1em]tag.south) -- node[near end, left]{\footnotesize$(m,t)$} ([xshift=-1em]tag.south |- A.north);
    \draw[<-] ([xshift=-1em]comp.south) -- node[midway, right]{\footnotesize$\bot,k'$} ([xshift=-1em]comp.south |- A.north);

    \draw[->] ([xshift=-1em]tag.south |- A.south) -- node[midway, left]{\footnotesize$(m,t)$} ([xshift=-1em]tag.south |- Z.north);
    \draw[<-] ([xshift=-1em]comp.south |- A.south) -- node[midway, right]{\footnotesize$\bot,k'$} ([xshift=-1em]comp.south |- Z.north);

    \draw[<-] ([xshift=-1em]vrfy.south) -- node[near end, left]{\footnotesize$(m',t')$} ([xshift=-1em]vrfy.south |- A.north);
    \draw[->] ([xshift=-1em]enc.south) -- node[midway, right]{\footnotesize$\bot,k$} ([xshift=-1em]enc.south |- A.north);

    \draw[<-] ([xshift=-1em]vrfy.south |- A.south) -- node[midway, left]{\footnotesize$(m',t')$} ([xshift=-1em]vrfy.south |- Z.north);
    \draw[->] ([xshift=-1em]enc.south |- A.south) -- node[midway, right]{\footnotesize$\bot,k$} ([xshift=-1em]enc.south |- Z.north);

    \draw[<-] ([yshift=0.5em]bob.west) -- node[midway, above]{\footnotesize$m', \bot$} (vrfy.east);
    \draw[->] ([yshift=-0.5em]bob.west) -- node[midway, below]{\footnotesize$0,1$} ([yshift=-0.5em]bob.west -| enc.east);

    \draw[->] (bob.east) to[out=315, in=45] node[midway, right]{\footnotesize$m',\bot$} (Z.east);

\node[draw,dotted,fit=(key) (tag) (comp) (vrfy) (enc)] {};
\end{tikzpicture} 	\caption{Representation of the real functionality.}\label{real_repr}
\end{figure}
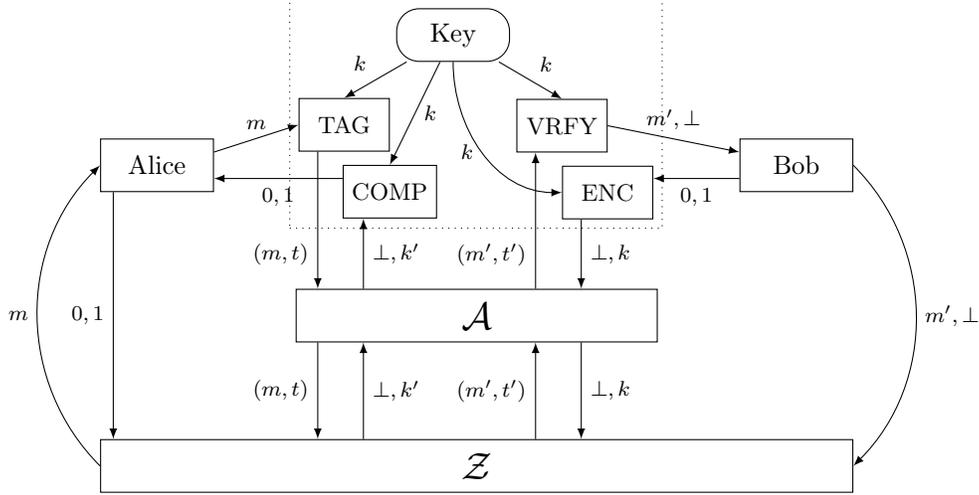

\subsubsection{Real Functionality}\label{subsec:realfunct}
In the real case, depicted in \cref{real_repr}, the functionality is an extension of the one described in Ref.~\cite{AL14}. In particular, we assume a key $k\in\calK$ is shared and used by the algorithm which computes the tag:
\begin{align*}
	\TAG:\calM\times \calK & \to\calM\times \calT \\
	(m,k)                  & \mapsto (m,h_k(m))
\end{align*}
and by the verification algorithm:
\begin{align*}
	\VRFY: \calM \times \calT & \times \calK \to\calM\cup \{\perp\} \\
	(m',t', k)                & \mapsto
	\begin{cases}
		m'    & \text{if } t'=h_k(m'), \\
		\perp & \text{otherwise}.
	\end{cases}
\end{align*}
Moreover, the key is also used in the $\ENC$ and $\COMP$ algorithms.
$\ENC$ encodes the response value $0$ with $\perp$ and $1$ with $k$ while $\COMP$ compares its input $r'\in \{k,\perp\}$ with $k$ and returns $1$ if equality holds or $0$ otherwise. 

\subsubsection{Ideal Functionality}\label{subsec:idealfunct}
In the ideal case, depicted in \cref{ideal_repr}, the sent message $m$ can be modified by an attacker, but any tampering is always signaled to Bob with the symbol $\perp$. The response $r\in\{0,1\}$ (representing communication failure or success, respectively) sent to Alice can again be modified before reaching the other party. Anyway, the ideal functionality $\calF$ returns $0$ to Alice both when Bob’s original input is $0$ and when the message has been tampered, i.e., only $1$ responses are authenticated.

\begin{figure}[tb]
	\centering
	\begin{tikzpicture}[rect/.style={draw, minimum height=0.7cm}]
    \node[rect, minimum width=10cm] at (0,0) (Z) {\Large$\mathcal{Z}$};
    \node[rect, minimum width=1.5cm, above=4cm of Z.west, anchor=west] (alice) {Alice};
    \node[rect, minimum width=1.5cm, above=4cm of Z.east, anchor=east] (bob) {Bob};
    \node[rect, minimum width=4cm] at ($(alice)!0.5!(bob)$) (F) {\Large$\mathcal{F}$};
    \node[rect, minimum width=4cm] at ($(F)!0.5!(Z)$) (S) {\Large$\mathcal{S}$};
    \node[rect, rounded corners=3mm, minimum width=1.5cm, left=1cm of S] (key) {Key};

\draw[->] (Z.west) to[out=135, in=225] node[midway, left]{\small$m$} (alice.west);
    \draw[->] ([xshift=0.5em]alice.south west) -- node[midway, left]{\small$0,1$} ([xshift=0.5em]Z.north west);
    
    \draw[->] ([yshift=0.5em]alice.east) -- node[midway, above]{\small$m$} ([yshift=0.5em]F.west);
    \draw[<-] ([yshift=-0.5em]alice.east) -- node[midway, below]{\small$0,1$} ([yshift=-0.5em]F.west);

    \draw[->] (key) -- node[midway, above]{\small$k$} (S);

    \draw[->] ([xshift=1em]F.south west) -- node[midway, left]{\small$m$} ([xshift=1em]S.north west);
    \draw[<-] ([xshift=2em]F.south west) -- node[midway, right]{\small$0,1$} ([xshift=2em]S.north west);

    \draw[->] ([xshift=1em]S.south west) -- node[midway, left]{\small$(m,t)$} ([xshift=1em]S.south west |- Z.north);
    \draw[<-] ([xshift=2em]S.south west) -- node[midway, right]{\small$\bot,k'$} ([xshift=2em]S.south west |- Z.north);

    \draw[<-] ([xshift=-2em]F.south east) -- node[midway, left]{\small$m'$} ([xshift=-2em]S.north east);
    \draw[->] ([xshift=-1em]F.south east) -- node[midway, right]{\small$0,1$} ([xshift=-1em]S.north east);

    \draw[<-] ([xshift=-2em]S.south east) -- node[midway, left]{\small$(m',t')$} ([xshift=-2em]S.south east |- Z.north);
    \draw[->] ([xshift=-1em]S.south east) -- node[midway, right]{\small$\bot,k$} ([xshift=-1em]S.south east |- Z.north);

    \draw[->] ([yshift=0.5em]F.east) -- node[midway, above]{\small$m',\bot$} ([yshift=0.5em]bob.west);
    \draw[<-] ([yshift=-0.5em]F.east) -- node[midway, below]{\small$0,1$} ([yshift=-0.5em]bob.west);

    \draw[->] (bob.east) to[out=315, in=45] node[midway, right]{\small$m',\bot$} (Z.east);
\end{tikzpicture} 	\caption{Representation of the ideal functionality.}\label{ideal_repr}
\end{figure}
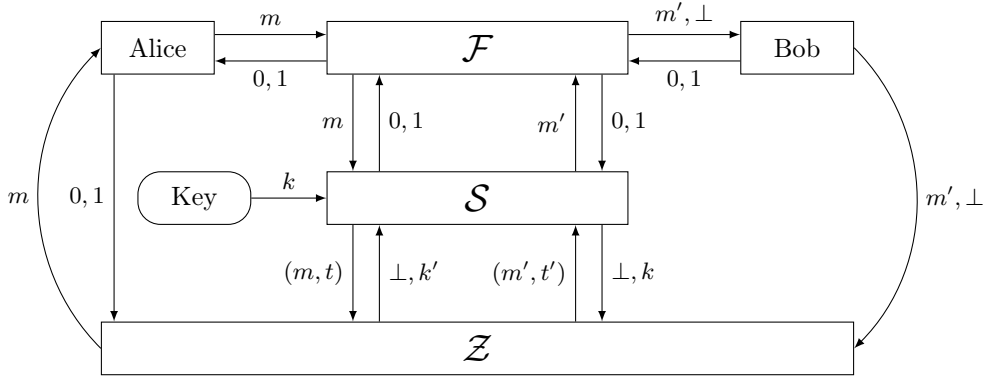

\subsection{Distinguisher}\label{subsec:disting}
In order to test the indistinguishability of the two functionalities, we consider the two systems involving the distinguisher. The distinguisher, denoted by $\calZ$, controls the message $m$ to be authenticated. We define $X$ to be the random variable whose realization is $m \in \calM$, with $\calM$ the set of all possible messages.

In the real setup, the pair $(m,t)\in \calM \times \calT$ is sent through the channel, which is identified with an \emph{adversary}, denoted by $\calA$. We can assume $\calA$ to be completely controlled by the distinguisher, hence $\calA$ redirects messages to and from the distinguisher. In particular, given $\calT$ the set of the possible tags, we denote as $Y$ the random variable whose realization is the pair $(m,t) \in \calM \times \calT$, and with $Y'$ the one related to $(m',t')  \in \calM \times \calT$ whose distribution is chosen by the distinguisher.

In the ideal case we use the same symbols $X$, $Y$ and $Y'$ for the random variables whose realizations are $m\in \calM$, $(m,t)\in \calM \times \calT$ and $(m',t')\in \calM \times \calT$, respectively. Notice that in the ideal setup the functionality is equipped with a simulator $\calS$ which avoids trivial distinguishment between real and ideal case, adding tags and properly encoding the response message.

Assuming that the simulator $\calS$ and the real functionality $\calA$ use the same key distribution, it follows that the random variables $X$, $Y$ and $Y'$ also have the same distributions in the two cases.

In order to describe the interactions of the distinguisher, we also introduce the random variables $\tilde{X}$ representing $x'\in \calM \cup \{\perp\}$ that the environment gets from Bob in the real case, and $\hat{X}$ representing $x'\in \calM \cup \{\perp\}$ that the environment gets from Bob in the ideal one.

The description provided lists random variables related to the authenticated transmission of $m$ and can be found with the same symbols also in Ref.~\cite{AL14}.
We now add the additional random variables needed in our extension of the scheme. We assume that the response $r \in \calK \cup \{\perp\}$ passes through $\calA$ ($\calS$) in the real (ideal) case and corresponds to the random variable $\tilde{Z}$ ($\hat{Z}$). Moreover, we denote by $\tilde{Z}'$ ($\hat{Z}'$) the eventually tampered response $r' \in \calK \cup \{\perp\}$ that the environment returns to $\calA$ ($\calS$).
Finally, we denote with $\tilde{F}$ and $\hat{F}$ the random variables in the real and ideal case, respectively, associated to Alice's final result $f \in \{0,1\}$.

\subsection{Security Proof}
Since the introduced random variables describe all of the distinguisher interactions with the system, we are now able to state our result in terms of trace distance of the joint probability distributions.
\begin{theorem}[Authentication-with-response]\label{thrm:main}
	With the notation and assumptions of \cref{subsec:disting} in place, no distinguisher $\calZ$ can distinguish between the two cases:
	\begin{enumerate}
		\item it is interacting with $\calA$ and participants running the authentication-with-response scheme based on the set of $\varepsilon$-ASU\textsubscript{2} functions $\calH=\{ h_{k}:\calM \to \calT\}_{k\in\calK}$, and using $\varepsilon'$-perfect keys,
		\item it is interacting with $\calS$ and participants running $\calF$,
	\end{enumerate}
	except with probability $\frac{1}{2}\left(1+\frac{|\calT|}{|\calK|} + \varepsilon + \varepsilon'\right)$.\\
	Equivalently, the following upper bound for the trace distance holds:
	\begin{equation*}\delta(P_{XYY'\tilde{X}\tilde{Z}\tilde{Z'}\tilde{F}}, P_{XYY'\hat{X}\hat{Z}\hat{Z'}\hat{F}}) \le \frac{|\calT|}{|\calK|} + \varepsilon + \varepsilon'.
	\end{equation*}
	where $P_{XYY'\tilde{X}\tilde{Z}\tilde{Z'}\tilde{F}}$ and $P_{XYY'\hat{X}\hat{Z}\hat{Z'}\hat{F}}$ are joint probability distributions.
\end{theorem}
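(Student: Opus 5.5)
We first reduce to a perfect key and then compare the two executions event by event. Let $P^{\varepsilon'}$ denote the real joint distribution $P_{XYY'\tilde X\tilde Z\tilde Z'\tilde F}$ when the key $K$ is $\varepsilon'$-perfect. Let $P^{0}$ be the same distribution when $K$ is uniform on $\calK$. The whole transcript is a (randomized) function of $K$ together with the distinguisher's choices. Trace distance does not increase under such processing, so $\delta(P^{\varepsilon'},P^{0})\le\varepsilon'$. The same holds on the ideal side, since the simulator draws its key from the same source. By the triangle inequality it then suffices to show that, with a perfect key, the real and ideal joint distributions are within $\frac{|\calT|}{|\calK|}+\varepsilon$. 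This mirrors the argument of Abidin and Larsson recalled above.

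With a uniform key, we couple the two executions on the same $K$, the same $X$ and the same tag $Y=(m,h_K(m))$. The environment's strategy is the same in both worlds, so $Y'$ agrees as well. The two worlds can only diverge at two bad events.

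\textbf{Event $E_1$} (forward forgery). The environment submits $(m',t')$ with $m'\neq m$ and $h_K(m')=t'$. The real $\VRFY$ then outputs $m'$, while the ideal $\calF$ outputs $\perp$. Since $(m',t')$ is a function of $(m,h_K(m))$ and of the environment's randomness, Condition~\ref{sak} gives $\Pr[E_1]\le\varepsilon$. This is exactly the Wegman--Carter bound.

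Outside $E_1$, the values $\tilde X$ and $\hat X$ coincide. The response $\tilde Z=\hat Z$ is $K$ or $\perp$, with the simulator encoding $1$ as $K$, so these also coincide.

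\textbf{Event $E_2$} (backward forgery). It can only occur when Bob rejected, so that $r=\perp$ and the key was never revealed. The environment then returns $r'=K$, and the real $\COMP$ outputs $1$ while the ideal $\calF$ outputs $0$. If instead Bob accepted and revealed $K$, forwarding $K$ gives $1$ in both worlds. Any other value gives $0$ in both worlds, provided the ideal $\calF$ passes $1$ only when $r=1$ and $r'$ decodes to $1$.

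To bound $E_2$, note that the environment's entire view at that point is $(m,h_K(m),m',t',\perp)$. We therefore bound $\Pr[r'=K\mid h_K(m)=t]$. By Condition~\ref{iak}, exactly $|\calK|/|\calT|$ keys are consistent with the observed tag. Conditioned on $h_K(m)=t$, the key $K$ is uniform on this set, so any guess succeeds with probability at most $|\calT|/|\calK|$. The additional information "Bob rejected $(m',t')$" only removes candidates. Even in the worst case it is absorbed by the same union bound, because the bound on the guess is taken before conditioning on rejection. Alternatively, this can be handled by bounding the joint probability $\Pr[\text{reject}\wedge r'=K]$ directly.

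Combining the pieces, the coupling inequality gives a distance of at most $\Pr[E_1\cup E_2]\le\varepsilon+\frac{|\calT|}{|\calK|}$ under a perfect key. Adding $\varepsilon'$ from the reduction yields the claimed bound. The distinguishing probability $\frac12(1+\delta)$ then follows from the standard relation between trace distance and distinguishing advantage.

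The main obstacle is making $E_2$ rigorous. The guess at $K$ is made after the environment has seen both the tag and the rejection, so the conditioning must be handled carefully. I would try to bound $\Pr[E_2]$ by summing over $t$ and over the environment's deterministic strategy, using $|\{k:h_k(m)=t\}|=|\calK|/|\calT|$. I would also need to check that the simulator's encoding of $\perp$ matches the real $\ENC$, so that no further discrepancy appears.
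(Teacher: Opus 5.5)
Your two bad events are the right ones. $E_1$ (forward forgery, cost $\varepsilon$) and $E_2$ (guessing $K$ after a rejection, cost $|\calT|/|\calK|$) are exactly the two terms $\Pr[h_K(m)=t,h_K(m')=t']$ and $\Pr[K=k_{y,y'}]$ in the paper's bound \eqref{eq:int_final}. Your coupling is also a cleaner packaging of the paper's split into no tampering, $K\in A_{y'}$ and $K\in A^{\mathsf{c}}_{y'}$.

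\textbf{The gap is the opening reduction to a perfect key.} In this model both worlds depend on the key: the simulator uses the same $\varepsilon'$-perfect key to produce $t$ and the response $k$, which is why $X,Y,Y'$ agree across the worlds. Data processing therefore costs $\varepsilon'$ on the real side and another $\varepsilon'$ on the ideal side. The triangle inequality then only yields $\frac{|\calT|}{|\calK|}+\varepsilon+2\varepsilon'$, not the stated bound. Your final step adds $\varepsilon'$ once, which the reduction as you wrote it does not justify.

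\textbf{The repair is to reverse the order.} Couple the two executions under the actual $\varepsilon'$-perfect key, so that $\delta(\tilde P,\hat P)\le\Pr[E_1\cup E_2]$, and pay for imperfection only once, on the bad event.
\begin{itemize}
\item Fix $m$ and the distinguisher's coins. The attack becomes deterministic: $y'=y'(m,t)$, with guess $g(m,t)$ after a rejection.
\item Then $E_1\cup E_2\subseteq\{K\in B\}$, where $B:=\bigcup_{t}\bigl(\{k: h_k(m)=t,\ h_k(m'(m,t))=t'(m,t)\}\cup\{g(m,t)\}\bigr)$ and the union runs over those $t$ with $y'(m,t)\neq(m,t)$.
\item By the two ASU\textsubscript{2} conditions, $|B|\le\varepsilon|\calK|+|\calT|$.
\item Hence $\Pr[K\in B]\le|B|/|\calK|+\varepsilon'$ by \cite[Lemma 1]{AL14}, i.e., by the definition of trace distance. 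Averaging over the coins gives the theorem.
\end{itemize}
This is precisely the paper's route through the cardinality bound \eqref{eq:cards}.

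This also resolves what you flagged as the main obstacle. Bound $E_2$ through the joint probability $\Pr[h_K(m)=t,\ h_K(m')\neq t',\ K=g(m,t)]\le 1/|\calK|$, i.e., count one key per tag value. Do not use the probability conditioned on rejection: it can exceed $|\calT|/|\calK|$, because rejection concentrates the posterior on fewer keys.
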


\begin{proof}
	Denote by $\tilde{P} = P_{XYY'\tilde{X}\tilde{Z}\tilde{Z'}\tilde{F}}$ and $\hat{P} = P_{XYY'\hat{X}\hat{Z}\hat{Z'}\hat{F}}$ the joint probabilities. The trace distance is given by
	\begin{equation}
		\label{eq:trace_proof}
		\delta(\tilde{P}, \hat{P})
		= \frac{1}{2} \sum_{m, y, y', x', z, z', f} \left|\tilde{P}(m, y, y', x', z, z', f) - \hat{P}(m, y, y', x', z, z', f)\right|.
	\end{equation}
	The trace distance can be divided in two terms, $\delta(\tilde{P}, \hat{P}) = \delta_{n}(\tilde{P}, \hat{P}) + \delta_{t}(\tilde{P}, \hat{P})$, corresponding to two distinct strategies that the distinguisher can adopt, i.e., attempting or not to tamper the pair $(m,t)\in \calM \times \calT$.

	\paragraph{No tampering.}
	We first analyze the case in which there is no tampering attempt, i.e., term $\delta_{n}(\tilde{P}, \hat{P})$, given by:
	\begin{align*}
		\delta_{n}(\tilde{P}, \hat{P})  = \; & \frac{1}{2} \sum_{m, y, y'=y, x', z, z', f} \left|\tilde{P}(m, y, y', x', z, z', f) - \hat{P}(m, y, y', x', z, z', f)\right| \\
		= \;                                 & \frac{1}{2} \sum_{m, y, x', z, z', f} \left|\tilde{P}(m, y, y, x', z, z', f) - \hat{P}(m, y, y, x', z, z', f)\right|.
	\end{align*}
	Because of the system description, i.e., due to the fact that $y'=y$ implies $\tilde{X} = \hat{X} = X$ and that $\tilde{Z} = \hat{Z}$, the term $\delta_{n}(\tilde{P}, \hat{P})$ can also be written as:

	\begin{align*}
\frac{1}{2} \sum_{m, y, z, z', f} P_{XYY'\tilde{X}\tilde{Z}}(m, y, y,m, z)
		\left| P_{\tilde{Z}'\tilde{F}\mid XYY'\tilde{X}\tilde{Z}}(z', f \mid m, y, y,m, z)- P_{\hat{Z}'\hat{F}\mid XYY'\hat{X}\hat{Z}}(z', f \mid m, y, y,m, z)\right|.
	\end{align*}
	Moreover, we have the following equality:
	\[
		P_{\tilde{Z}'\mid XYY'\tilde{X}\tilde{Z}}(z'\mid  m, y, y,m,z)  = P_{\hat{Z}'\mid XYY'\hat{X}\hat{Z}}(z'\mid  m, y, y,m,z).
	\]
	Indeed, since everything goes in the same way before selecting \(z'\), there is no reason why the distinguisher would choose a different \(z'\) depending on whether it is interacting with the ideal or real functionality. In other words, the condition $Y = Y' = y$ implies that also $\tilde{Z}' = \hat{Z}'$. Hence, $\delta_{n}(\tilde{P}, \hat{P})$ further simplifies to:

	\[
		\frac{1}{2} \sum_{m, y, z, z', f} P_{XYY'\tilde{X}\tilde{Z}\tilde{Z}'}(m, y, y,m, z,z')
		\left| P_{\tilde{F}\mid XYY'\tilde{X}\tilde{Z}\tilde{Z'}}(f \mid m, y, y,m, z,z')- P_{\hat{F}\mid XYY'\hat{X}\hat{Z}\hat{Z}'}(f \mid m, y, y,m, z,z')\right|.
	\]
	Finally, since the computation of the two functionalities is equal in this case, it follows that:

	\[
		P_{\tilde{F}'\mid XYY'\tilde{X}\tilde{Z}\tilde{Z}'}(f\mid  m, y, y,m, z, z') = P_{\hat{F}'\mid XYY'\hat{X}\hat{Z}\hat{Z}'}(f\mid  m, y, y,m, z, z'),
	\]
	concluding that the no tampering component $\delta_{n}(\tilde{P}, \hat{P})$ vanishes.

	\paragraph{Tampering.}
	The trace distance of \cref{eq:trace_proof} therefore reduces to $\delta(\tilde{P}, \hat{P}) = \delta_{t}(\tilde{P}, \hat{P})$. The term $\delta_{t}(\tilde{P}, \hat{P})$ is given by:
	\begin{equation*}\delta_{t}(\tilde{P}, \hat{P})=\frac{1}{2} \sum_{m, y, y'\not=y, x', z, z', f} \left|\tilde{P}(m, y, y', x', z, z', f) - \hat{P}(m, y, y', x', z, z', f)\right|,
	\end{equation*}
	which represents the distinguisher's attempt to determine in which system it is tampering the first sent message.
	Given $y=(m,t)\in \calM \times\calT$, define the following subset of $\calK$:
	\[
		A_y = \{k\in \calK \mid h_k(m)=t\}\subseteq \calK.
	\]
	and denote by
$A^{\mathsf{c}}_{y}\subseteq \calK$ its complementary
sets in the key space $\calK$, hence
	\begin{align*}
		\tilde{P}(m, y, y', x', z, z', f) = \tilde{P}_{A_{y'}}(m, y, y', x', z, z', f)+\tilde{P}_{A^{\mathsf{c}}_{y'}}(m, y, y', x', z, z', f)
	\end{align*}
	with
	\begin{align*}
		\tilde{P}_{A_{y'}}(m, y, y', x', z, z', f)              & = \Pr[X=m, Y=y, Y'=y', \tilde{X} = x', \tilde{Z}=z, \tilde{Z}'=z', \tilde{F}=f, K\in A_{y'}]              \\
		\tilde{P}_{A^{\mathsf{c}}_{y'}}(m, y, y', x', z, z', f) & =\Pr[X=m, Y=y, Y'=y', \tilde{X} = x', \tilde{Z}=z, \tilde{Z}'=z', \tilde{F}=f, K\in A^{\mathsf{c}}_{y'}].
	\end{align*}
	Analogously, the same can be done  for $\hat{P}_A$ and $\hat{P}_{A^{\mathsf{c}}}$:
	\[
		\hat{P}(m, y, y', x', z, z', f) = \hat{P}_{A_{y'}}(m, y, y', x', z, z', f)+\hat{P}_{A^{\mathsf{c}}_{y'}}(m, y, y', x', z, z', f).
	\]
	Then, we use the triangular inequality:
	\begin{equation}\label{eq: tr in}
		\begin{split}
\delta_{t}(\tilde{P}, \hat{P})
			\le \  & \frac{1}{2} \left[\sum_{m, y, y'\not=y, x', z, z', f}\left|\tilde{P}_{A_{y'}}(m, y, y', x', z, z', f) - \hat{P}_{A_{y'}}(m, y, y', x', z, z', f)\right| \right] +                                \\
			       & + \frac{1}{2}   \left[\sum_{m, y, y'\not=y, x', z, z', f}  \left|\tilde{P}_{A^{\mathsf{c}}_{y'}}(m, y, y', x', z, z', f) - \hat{P}_{A^{\mathsf{c}}_{y'}}(m, y, y', x', z, z', f)\right| \right].
		\end{split}
	\end{equation}
	Consider the first term of \cref{eq: tr in} (first line). Due to the condition given by the set \(A_{y'}\), where \(y'= (m',t')\), the following hold:
	\begin{align*}
		\hat{P}_{A_{y'}}(m, y, y', x', z, z', f)   & = 0 \text{ if } x'\neq \perp; \\
		\tilde{P}_{A_{y'}}(m, y, y', x', z, z', f) & = 0 \text{ if } x'\neq m'.
	\end{align*}
	Namely, when $y'\neq y$ the ideal functionality recognizes the tampering and considers the authentication failed. On the other hand, the real functionality sets $\tilde{X}=m'$ because of \(h_K(m')=t'\). Moreover, specifying the set \(A_{y'}\) becomes redundant if \(\tilde{X} = m'\). Therefore, the first term of \cref{eq: tr in} simplifies to
	\begin{align}\label{eq: tr in 1}
		\sum_{m,y,y'\neq y, z, z',f} \tilde{P}(m, y, y', m', z, z', f) & = \sum_{m,y,y'\neq y} \tilde{P}(m, y, y', m') \nonumber                             \\
		                                                               & =\; \sum_m P_X(m)\sum_{t, y'\neq y}P_{Y'\mid Y}(y'\mid y)\Pr[h_K(m)=t, h_K(m')=t'],
	\end{align}
where the first equality comes from the total probability law and the second is proved in~\cite[Theorem 5]{AL14}\footnote{Observe that the resulting term is exactly the same.}.\\
	The second term of \cref{eq: tr in} (second line) corresponds to the case $(m',t') \not=(m, t)$ and both, the $\varepsilon$-ASU\textsubscript{2} authentication scheme and the ideal functionality recognize the tampering. \\
	In the real functionality, the probability of the set
	\[
		\{X=m, Y=y, Y'=y', \tilde{X'} = x', \tilde{Z} = z, \tilde{Z}' = z', \tilde{F} =f, K\in A^{\mathsf{c}}_{y'}\}
	\]
	is the same as the set
	\[
		\{X=m, h_K(m)=t, Y'=y', \tilde{X'} = x', \tilde{Z} = z, \tilde{Z}' = z', \tilde{F} =f, h_K(m')\neq t'\}.
	\]
	Furthermore, the condition  $h_K(m') \not= t' $ implies that $\tilde{X'} = \perp$ and $\tilde{Z} = \perp $, then a (eventually) nonzero probability set can be written as
	\[
		\{X=m, h_K(m)=t, Y'=y', \tilde{X'} = \perp, \tilde{Z} = \perp, \tilde{Z}' = z', \tilde{F} =f, h_K(m')\not=t'\}
	\]
	and the probability is the same as the set
	\begin{equation}\label{set1}
		\{X=m, Y'=y', h_K(m)=t, h_K(m')\not=t', \tilde{Z}' = z',\ \tilde{F} =f\}.
	\end{equation}
	The same happens for the ideal functionality events involved in the sum; the set to be considered is
	\begin{equation}\label{set2}
		\{X=m, Y'=y', h_K(m)=t, h_K(m')\not=t', \hat{Z}' = z',\ \hat{F} =f\}.
	\end{equation}
	In particular, the term in the second line of \cref{eq: tr in} can be simplified, in terms of probabilities of the sets~(\ref{set1}) and (\ref{set2}):
	\begin{equation}
		\begin{split}\label{eq:intermediate1}
			 & \frac{1}{2} \sum_{m, t, y'\neq y, x', z, z', f}  \left|\tilde{P}_{A^{\mathsf{c}}}(m, y, y', x', z, z', f) - \hat{P}_{A^{\mathsf{c}}}(m, y, y', x', z, z', f)\right| \\
			 & =\; \frac{1}{2} \sum_{m, t,y'\neq y, z', f}  \Big|\Pr[X=m, Y' = y', h_K(m)=t, h_K(m')\not=t',\tilde{Z}' = z', \tilde{F} =f] +                                       \\
			 & \hspace{2cm}  - \Pr[X=m, Y' = y', h_K(m)=t, h_K(m')\not=t',\hat{Z}' = z', \hat{F} =f]\Big|.
		\end{split}
	\end{equation}
	Defining the set $E(m,t,y'):=\{X=m, Y' = y', h_K(m)=t, h_K(m')\not=t'\}$, \cref{eq:intermediate1} can be rewritten as
	\begin{align}\label{eq:intermediate2}
		 & \frac{1}{2} \sum_{m, t, y'\neq y,z', f}\Pr[E(m,t,y')] \left|
		\Pr[\tilde{Z}'=z',\tilde{F}=f\mid E(m,t,y')]-\Pr[\hat{Z}'=z',\hat{F}=f\mid E(m,t,y')]
		\right|    \nonumber                                                                                    \\
		 & = \frac{1}{2} \sum_{m, t, y'\neq y, z', f}\Pr[E(m,t,y')] \Big|
		\Pr[\tilde{F}=f\mid \tilde{Z}'=z',E(m,t,y')]\Pr[\tilde{Z}'=z'\mid E(m,t,y')]+                           \\
		 & \hspace{2cm}  -\Pr[\hat{F}=f\mid \hat{Z}'=z',E(m,t,y')]\Pr[\hat{Z}'=z'\mid E(m,t,y')]\Big|.\nonumber
	\end{align}

	\begin{remark}
		Given $E(m,t,y')$ there is no reason why the real and the ideal case differ. We have therefore:
		\[
			\Pr[\hat{Z}'=z'\mid E(m,t,y')] = \Pr[\tilde{Z}'=z'\mid E(m,t,y')].
		\]
		Hence, \cref{eq:intermediate2} simplifies to
		\begin{equation}\label{eq:intermediate3}
			\frac{1}{2} \sum_{m, t,y'\neq y,z', f}\Pr[\tilde{Z}'=z', E(m,t,y')] \left| \Pr[\tilde{F}=f|\tilde{Z}'=z',E(m,t,y')]-\Pr[\hat{F}=f|\hat{Z}'=z',E(m,t,y')]\right|.
		\end{equation}
	\end{remark}
	\begin{remark}
		Since in the ideal case the tampering is always detected regardless of the value of $z'$, for any $z'$, we have
		\begin{align*}
			\Pr[\hat{F}=1\mid \hat{Z}'=z', E(m,t,y')] & = 0, \\
			\Pr[\hat{F}=0\mid \hat{Z}'=z', E(m,t,y')] & = 1.
		\end{align*}
	\end{remark}
	\noindent Therefore, expanding \cref{eq:intermediate3} over all possible values of $f$, we get
	\begin{align}\label{eq:intermediate4}
		\frac{1}{2} & \sum_{m, t,y'\neq y, z'}\Pr[\tilde{Z}'=z', E(m,t,y')] \left[1-\Pr[\tilde{F}=0 \mid \tilde{Z}'=z',E(m,t,y')]+\Pr[\tilde{F}=1 \mid \tilde{Z}'=z',E(m,t,y')]\right] \nonumber           \\
		= \;        & \sum_{m, t,y'\neq y,z'}\Pr[\tilde{Z}' = z', E(m,t,y')]\Pr[\tilde{F}=1 \mid \tilde{Z}'=z', E(m,t,y')]                                                                       \nonumber \\
		=\;         & \sum_{m, t,y'\neq y, z'}\Pr[\tilde{F}=1,\tilde{Z}'=z',E(m,t,y')]                                                                                                          \nonumber  \\
		=\;         & \sum_{m,t,y'\neq y}\Pr[\tilde{F}=1, E(m,t,y')],
	\end{align}
	since $\Pr[\tilde{F}=0 \mid \tilde{Z}'=z',E(m,t,y')] = 1- \Pr[\tilde{F}=1 \mid \tilde{Z}'=z',E(m,t,y')]$ and the total probability law is applied to get rid of $\tilde{Z}'$. Then, observing that the knowledge of \(X=m\) and \(Y'=y'\) do not affect \(\tilde{F}\) and \(K\), \cref{eq:intermediate4} becomes
	\begin{align*}
		 & \sum_{m,t,y'\neq y}\Pr[\tilde{F}=1, E(m,t,y')] = \sum_{m,t,y'\neq y}\Pr[\tilde{F}=1, h_K(m)=t, h_K(m') \neq t', Y'=y',X=m]         \\
		 & =\;   \sum_{m,t,y'\neq y}\Pr[X=m]\Pr[h_K(m)=t]\Pr[Y'=y'\mid X=m, h_K(m)=t]\Pr[\tilde{F}=1, h_K(m') \neq t'| X=m, Y'= y', h_K(m)=t] \\
		 & =\;   \sum_{m,t,y'\neq y}\Pr[X=m]\Pr[Y'=y'\mid Y = y]\Pr[h_K(m)=t]\Pr[\tilde{F}=1, h_K(m') \neq t'| h_K(m)=t]                      \\
		 & =\;  \sum_{m,t,y'\neq y}\Pr[X=m]\Pr[Y'=y'\mid Y = y]\Pr[\tilde{F}=1, h_K(m)=t, h_K(m') \neq t'].
	\end{align*}
	Moreover, considering that to realize \(\tilde{F}=1\), a distinguisher has to guess a key, \cref{eq:intermediate4} can be bounded as follows:
	\begin{align}
		\sum_{m,t, y'\neq y}\Pr[\tilde{F}=1, E(m,t,y')] & \leq \sum_{m}P_{X}(m)\sum_{t,y' \neq y}P_{Y'|Y}(y'|y)\max_{k\in A_y\cap A^c_{y'}}\Pr[K=k] \nonumber \\
		                                                & = \sum_{m}P_{X}(m)\sum_{t,y' \neq y}P_{Y'|Y}(y'|y)\Pr[K=k_{y,y'}]. \label{eq: tr in 2}
	\end{align}
	The second equality is given by the observation that there exists, possibly not unique, a \(k_{y,y'}\) realizing the maximum, i.e., such that
	\[
		\Pr[K=k_{y,y'}] = \max_{k\in A_y\cap A^c_{y'}}\Pr[K=k].
	\]

	\paragraph{Overall bound.} Combining Equations~(\ref{eq: tr in 1}) and (\ref{eq: tr in 2}) we have a bound for the trace distance: \begin{equation}\label{eq:int_final}
		\delta(\tilde{P}, \hat{P}) = \delta_{t}(\tilde{P}, \hat{P})\le\sum_{m}P_{X}(m)\sum_{t,y' \neq y}P_{Y'|Y}(y'|y)\Big(\Pr[h_K(m)=t,h_K(m')=t']+\Pr[K=k_{y,y'}]\Big).
	\end{equation}
	The term $P_{Y'|Y}((m',t')|(m,t))$ corresponds to the attack strategy. Assume that this is deterministic, meaning that $(m',t')$ is a function of $(m,t)$, then:

	\begin{align}\label{eq:inter0}
		     & \sum_{t,y' \neq y}P_{Y'|Y}(y'|y)\Big(\Pr[h_K(m)=t,h_K(m')=t']+\Pr[K=k_{y,y'}]\Big) \nonumber \\
		= \; & \sum_{t}\Pr[h_K(m)=t,h_K(m'(m,t))=t'(m,t)]+\Pr[K=k_{y,y'(m,t)}].
	\end{align}
	By construction, \(t_1\neq t_2\) implies that the sets $\{h_K(m)=t_1\}$ and $\{h_K(m)=t_2\}$ are disjoint, hence \cref{eq:inter0} can be further elaborated to
	\begin{equation*}\label{eq:inter1}
		\Pr\left[\bigsqcup_t\{h_K(m)=t, h_K(m'(m,t))=t'(m,t)\}\right]  +  \Pr\left[\bigsqcup_t\{K=k_{y,y'(m,t)}\}\right].
	\end{equation*}
	Moreover, the different condition on $h_K(m'(m,t))$ and $t'(m,t)$ between the two terms of the sum provides another events disjunction, further simplifying the probability to
	\begin{equation}\label{eq:inter2}
		\Pr\left[\bigsqcup_t \Big(\{h_K(m)=t, h_K(m'(m,t))=t'(m,t)\} \sqcup \{K=k_{y,y'(m,t)}\} \Big)\right].
	\end{equation}
	Notice that this is the probability of an event of $\mathcal{P}(\calK)$.
	In order to bound the probability of \cref{eq:inter2}, we give a bound on the number of events.
	By \cref{def:epsASU} of $\varepsilon$-ASU\textsubscript{2} functions, we have
\[
		\Big|\{h_K(m)=t, h_K(m'(m,t))=t'(m,t)\}\Big| \le \varepsilon \frac{|\calK|}{|\calT|}
	\]
while $|\{K=k_{y,y'(m,t)}\}| = 1$. Combining these two cardinalities, we get the upper bound:
	\begin{equation}\label{eq:cards}
		\left|\bigsqcup_t \Big(\{h_K(m)=t, h_K(m'(m,t))=t'(m,t)\} \sqcup \{K=k_{y,y'(m,t)}\} \Big)\right|\le |\calT|\left(\varepsilon\frac{|\calK|}{|\calT|} + 1\right) = |\calT| + \varepsilon |\calK|.
	\end{equation}
	Using the cardinality bound (\ref{eq:cards}), and recalling the assumption of $\varepsilon'$-perfectness of the keys and the result in~\cite[Lemma 1]{AL14}, we further bound \cref{eq:int_final} to

	\begin{equation*}
		\begin{split}
			\delta(\tilde{P}, \hat{P}) & \leq \; \sum_{m}\Pr[X=m]\left(\frac{|\calT| + \varepsilon |\calK|}{|\calK|} + \varepsilon'\right)       \\
			                           & = \; \left(\frac{|\calT|}{|\calK|} + \varepsilon + \varepsilon'\right) \underbrace{\sum_m\Pr[X=m]}_{=1} \\
			                           & = \; \frac{|\calT|}{|\calK|} + \varepsilon + \varepsilon'.
		\end{split}
	\end{equation*}
	Observe that the resulting bound does not depend on the particular attack strategy selected by the deterministic approach.
	\paragraph{Probabilistic attacks. }
	To extend the result to a probabilistic attack, we strictly follow the proof of~\cite[Theorem 5]{AL14}. Introduce an auxiliary probability space $(\Omega, \mathcal{B},\mu)$ for the random variable $Y'=(X',T')$, where $\Omega$ is the sample space, $\mathcal{B}$ is the $\sigma$-algebra of events and $\mu$ is the probability measure. Using the indicator function $\chi$, we have
	\[
		P_{Y'|Y}((m',t')|(m,t))=\int_{\Omega}\chi_{\{\omega\in\Omega\mid Y'(m,t,\omega)=(m',t')\}}(\omega)\,d\mu.
	\]
	Then
	\begin{align*}
		\delta(\tilde{P}, \hat{P}) \leq \; & \sum_{m}P_{X}(m)\sum_{t,y' \neq y}P_{Y'|Y}(y'|y)\left(\Pr[h_K(m)=t,h_K(m')=t']+\Pr[K=k_{y,y'}]\right)                                                                                          \\
		=\;                                & \sum_{m}P_{X}(m)\sum_{t,y' \neq y}\int_{\Omega}\chi_{\{\omega\in\Omega\mid Y'(m,t,\omega)=(m',t')\}}(\omega)\,d\mu\left(\Pr[h_K(m)=t,h_K(m')=t']+\Pr[K=k_{y,y'}]\right)                        \\
		=\;                                & \int_{\Omega}\sum_{m}P_{X}(m)\sum_t\underbrace{\sum_{y'\neq y}\chi_{\{\omega\in\Omega\mid Y'(m,t,\omega)=(m',t')\}}(\omega)}_{(*)}\left(\Pr[h_K(m)=t,h_K(m')=t']+\Pr[K=k_{y,y'}]\right)\,d\mu.
	\end{align*}
	For each fixed \(m\), \(t\) and \(\omega\), the unique non-zero term of the sum $(*)$ occurs with \((m',t')=Y'(m,t,\omega)\), resulting in a deterministic attack. Hence, the equation simplifies to
	\[
		\int_{\Omega}\sum_{m}P_{X}(m)\sum_t\Pr[h_K(m)=t,h_K(X'(m,t,\omega))=T'(m,t,\omega)]+\Pr[K=k_{y,Y'(m,t,\omega)}]\,d\mu.
	\]
	The proof is now completed, applying the above approach for the deterministic strategy
	\begin{align*}
		\delta(\tilde{P}, \hat{P}) \leq \; & \int_{\Omega}\sum_{m}P_{X}(m)\sum_t\Pr[h_K(m)=t,h_K(X'(m,t,\omega))=T'(m,t,\omega)]+\Pr[K=k_{y,Y'(m,t,\omega)}]\,d\mu           \\
		\leq                               & \int_{\Omega}\frac{|\calT|}{|\calK|} + \varepsilon + \varepsilon'\,d\mu = \frac{|\calT|}{|\calK|} + \varepsilon + \varepsilon'.
	\end{align*}

\end{proof}
\cref{thrm:main} directly implies the $\big(\frac{|\calT|}{|\calK|} + \varepsilon + \varepsilon'\big)$-security of the {authentication-with-response} scheme.
In order to evaluate the found bound, we compare it with the one provided in Ref.~\cite{AL14}: the additional response phase (steps~\ref{resp} and~\ref{ver-res} of our scheme) adds a $\frac{|\calT|}{|\calK|}$ contribution.
Notice that, in the case of polynomial hashing~\cite{BO93}, the key space is given by $\calK = \calT \times \calT$, hence the additional contribution reduces to $\frac{|\calT|}{|\calK|}=\frac{1}{|\calT|} < \varepsilon$.

\section{Conflict of interest statement}
The authors declare no conflicts of interest.

\section{Acknowledgments}
The Authors acknowledge financial support from EQUO (European QUantum ecOsystems) project, funded by the European Commission in the Digital Europe Programme [GA No. 101091561]. DB acknowledges support from the European Union ERC StG, QOMUNE, 101077917. AZ acknowledges support from the European Union - NextGeneration EU, "Integrated infrastructure initiative in Photonic and Quantum Sciences" - I-PHOQS [IR0000016, ID D2B8D520, CUP B53C22001750006]. 

\bibliographystyle{ieeetr}
\bibliography{references}

\appendix
\section{Theoretical Advantage in Multiple QKD Rounds}\label{app:mult:round}
We investigate security bounds to the UC security of the QKD protocol after multiple rounds. In particular, we compare bounds provided by the straightforward mutual authentication with those of the authentication-with-response variant.

\subsection{Security Bound to the Straightforward Approach}

In the literature, it is usually discussed that due to the Universal Composability theorem, combining an $\varepsilon_1$-secure quantum key exchange with an $\varepsilon_2$-secure authentication step provides an $(\varepsilon_1+\varepsilon_2)$-secure QKD protocol.
Actually, this reasoning is missing a crucial point. Regardless of the above-described {authentication-with-response} variant, to obtain the needed mutual authentication between parties, a single authentication step is not enough. Indeed, at least a couple of them is required, one from one party to the other and vice versa, as depicted in \cref{fig:multi-round-straightforward}. We refer to it as the \emph{straightforward approach}.

Therefore, the previous statement should be that combining an $\varepsilon_1$-secure quantum key exchange with two $\varepsilon_2$-secure authentication steps provides an $(\varepsilon_1+2\varepsilon_2)$-secure QKD protocol~\cite{molotkov2024many}. This factor $2$ becomes critical in evaluating the theoretical bound to the QKD protocol security as long as the QKD rounds go on. This is due to the fact that, while the security parameter $\varepsilon_1$ remains the same along all QKD rounds, $\varepsilon_2$ grows depending on the knowledge of the involved authentication key.

To evaluate the above-mentioned theoretical bound and to avoid inconsistency, it is useful to define as:
\begin{itemize}
	\item $\varepsilon_{2,i}$ the $\varepsilon_2$ corresponding to the $i$-th QKD round;
	\item $\varepsilon'_i$ the perfectness parameter of the authentication key used in the $i$-th QKD round;
	\item $\tilde{\varepsilon}_i$ the security parameter of the overall $i$-th QKD round.
\end{itemize}

Assume that, at the very first QKD round, perfect authentication keys are used, i.e., $\varepsilon'_1$-perfect keys with $\varepsilon'_1=0$, and that, for the whole process, authentication is performed via $\varepsilon$-ASU\textsubscript{2} functions. Notice that, for subsequent rounds, the authentication key comes from the previous one. Following the results of~\cite{AL14}, the following relations hold:
\begin{equation*}
	\varepsilon'_i = \begin{cases}
		0                         & \text{for } i=1  \\
		\tilde{\varepsilon}_{i-1} & \text{for }i > 1
	\end{cases},
	\qquad \varepsilon_{2,i} = \varepsilon + \varepsilon'_i,
	\qquad \tilde{{\varepsilon}}_i = \varepsilon_1 + 2\varepsilon_{2,i}.
\end{equation*}

The general formula regarding the $i$-th round in terms of $\varepsilon_1$ and $\varepsilon$ can be easily proved by induction. In particular, we obtain the following:
\begin{equation*}
	\varepsilon'_i=(2^{i-1}-1)\varepsilon_1+(2^{i}-2)\varepsilon, \qquad
	\varepsilon_{2,i}=(2^{i-1}-1)\varepsilon_1+(2^{i}-1)\varepsilon, \qquad
	\tilde{\varepsilon}_i=(2^{i}-1)\varepsilon_1+(2^{i+1}-2)\varepsilon.
\end{equation*}

We notice that the theoretical bound $\tilde{\varepsilon}_i$, representing the overall security of the $i$-th round, is exponential in $i$. This renders this approach impracticable for a large number of rounds. While this does not prove that performing multiple QKD rounds is insecure, we point out that, to the best of our knowledge, a complete security proof for concatenating QKD rounds with mutual authentication in the straightforward approach is missing.

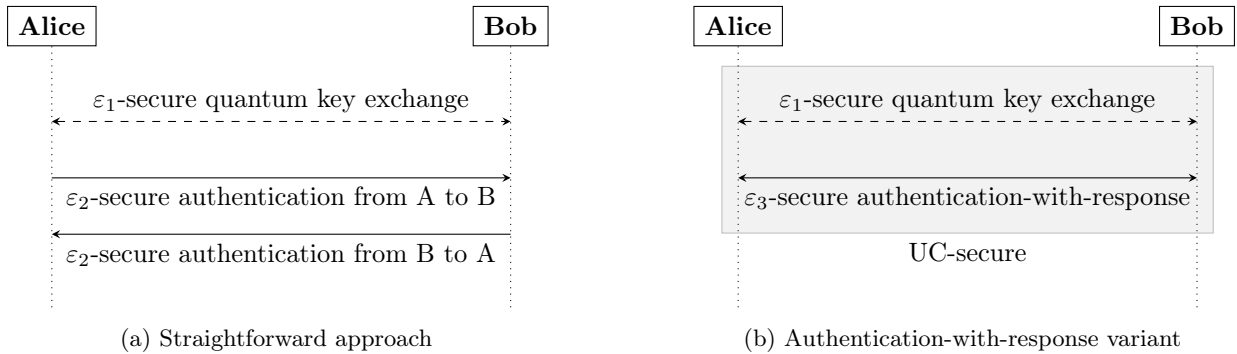
\begin{figure}[tb]
	\centering
	\begin{subfigure}{0.45\textwidth}
		\centering
		\begin{tikzpicture}[
		>=stealth,
		party/.style      = {draw, fill=white, inner sep=4pt, font=\bfseries},
		note/.style       = {inner sep=2pt, align=left},
		annotation/.style = {note, font=\footnotesize, gray},
		msg/.style        = {->},
		phasebox/.style   = {fill=gray!10, draw=gray!50},
		inbox/.style      = {draw, rounded corners, fill=white, inner sep=2pt},
		num/.style        = {circle, draw, fill=white, inner sep=1.5pt, font=\footnotesize\bfseries},
	]

	\node[party] (A) {Alice};
	\node[party, right=5cm of A] (B) {Bob};

	\coordinate[below=1cm of A] (Ar0);
	\coordinate (Br0) at (B |- Ar0);
	\draw[<->, dashed] (Ar0) -- node[above, midway]{$\varepsilon_1$-secure quantum key exchange} (Br0);

	\coordinate[below=0.75cm of Ar0] (Ar1);
	\coordinate (Br1) at (B |- Ar1);
	\draw[->] (Ar1) -- node[below, midway]{$\varepsilon_2$-secure authentication from A to B} (Br1);

	\coordinate[below=0.75cm of Ar1] (Ar2);
	\coordinate (Br2) at (B |- Ar2);
	\draw[<-] (Ar2) -- node[below, midway]{$\varepsilon_2$-secure authentication from B to A} (Br2);

	\coordinate[below=1cm of Ar2] (A-bot);
	\coordinate (B-bot) at (B |- A-bot);
	\draw[dotted] (A.south) -- (A-bot);
	\draw[dotted] (B.south) -- (B-bot);
\end{tikzpicture}
 		\caption{Straightforward approach}\label{fig:multi-round-straightforward}
	\end{subfigure}
	\hfill
	\begin{subfigure}{0.45\textwidth}
		\centering
		\begin{tikzpicture}[
		>=stealth,
		party/.style      = {draw, fill=white, inner sep=4pt, font=\bfseries},
		note/.style       = {inner sep=2pt, align=left},
		annotation/.style = {note, font=\footnotesize, gray},
		msg/.style        = {->},
		phasebox/.style   = {fill=gray!10, draw=gray!50},
		inbox/.style      = {draw, rounded corners, fill=white, inner sep=2pt},
		num/.style        = {circle, draw, fill=white, inner sep=1.5pt, font=\footnotesize\bfseries},
	]

	\node[party] (A) {Alice};
	\node[party, right=5cm of A] (B) {Bob};

	\coordinate[below=1cm of A] (Ar0);
	\coordinate (Br0) at (B |- Ar0);
	\draw[<->, dashed] (Ar0) -- node[above, midway] (lbl-0) {$\varepsilon_1$-secure quantum key exchange} (Br0);

	\coordinate[below=0.75cm of Ar0] (Ar1);
	\coordinate (Br1) at (B |- Ar1);
	\draw[<->] (Ar1) -- node[below, midway] (lbl-1) {$\varepsilon_3$-secure authentication-with-response} (Br1);

	\coordinate[below=1.75cm of Ar1] (A-bot);
	\coordinate (B-bot) at (B |- A-bot);
	\draw[dotted] (A.south) -- (A-bot);
	\draw[dotted] (B.south) -- (B-bot);

	\begin{scope}[on background layer]
		\node[phasebox, inner sep=5pt,
			fit=(Ar0)(Ar1)(Br0)(Br1)(lbl-0)(lbl-1),
			label={below: UC-secure}] {};
	\end{scope}
\end{tikzpicture}
 		\caption{Authentication-with-response variant}\label{fig:multi-round-auth-with-resp}
	\end{subfigure}
	\caption{Mutually authenticated QKD protocol}
\end{figure}

\subsection{Security Bound to the Authentication-with-response Scheme}

The crucial difference from the straightforward approach is that the security proof of~\cref{thrm:main} allows us to treat the variant as a single step that composes with the quantum key exchange.
Even though mutual authentication still consists of two parts, i.e., the $\varepsilon$-ASU\textsubscript{2} authentication and the key response, the proof analyzes them jointly\footnote{Note that the same cannot be done a priori in the straightforward approach, since the two authentication steps are performed using different keys.}, avoiding the aforementioned factor of two.
Therefore, combining an $\varepsilon_1$-secure quantum key exchange with an $\varepsilon_3$-secure authentication-with-response step provides an $(\varepsilon_1+\varepsilon_3)$-secure (mutually authenticated) QKD protocol, as depicted in~\cref{fig:multi-round-auth-with-resp}.

Similarly to the straightforward approach, we define:
\begin{itemize}
	\item $\varepsilon_{3,i}$ the $\varepsilon_3$ corresponding to the $i$-th QKD round;
	\item $\varepsilon''_i$ the perfectness parameter of the authentication key used in the $i$-th QKD round;
	\item $\hat{\varepsilon}_i$ the security parameter of the overall $i$-th QKD round;
	\item $\dot{\varepsilon}$ as the sum $\varepsilon+\frac{|\calT|}{|\calK|}$, where $\calT$ and $\calK$ are the space of tags and keys respectively related to the involved set of $\varepsilon$-ASU\textsubscript{2} functions.
\end{itemize}

Moreover, the following relations hold:
\begin{equation*}
	\varepsilon''_i = \begin{cases}
		0                       & \text{for }i = 1 \\
		\hat{\varepsilon}_{i-1} & \text{for }i>1
	\end{cases},
	\qquad \varepsilon_{3,i} = \dot{\varepsilon} + \varepsilon''_i,
	\qquad \tilde{\varepsilon}_i = \varepsilon_1 + \varepsilon_{3,i}.
\end{equation*}

Again, the general formula regarding the $i$-th round in terms of $\varepsilon$ and $\dot{\varepsilon}$ can be proved by induction. We obtain the following:
\begin{equation*}
	\varepsilon''_i=(i-1)\varepsilon_1+(i-1)\dot{\varepsilon}, \qquad
	\varepsilon_{3,i}=(i-1)\varepsilon_1+i\dot{\varepsilon}, \qquad
	\hat{\varepsilon}_i=i\varepsilon_1+i\dot{\varepsilon}.
\end{equation*}

We conclude that the provided security bound $\hat{\varepsilon}_i$ grows linearly in $i$. This proves the security of the QKD protocol, including the authentication-with-response, even after multiple rounds.

\end{document}